\newcommand{\myTitle}{Membership(s) and compliance(s) with class-based graphs}
\newcounter{defNumber}
\newenvironment{definition}[1]
{
\medskip
\refstepcounter{defNumber}\noindent\textbf{Definition~\arabic{defNumber} (#1)}\hspace{1ex}}
{
\smallskip
}
\newenvironment{proof}[1][Proof]{\textbf{#1.} }{\ \rule{0.5em}{0.5em}}
\newcommand{\backbullet}{\ensuremath{\,\,^{\bullet}\!\!}}
\newcommand{\topbullet}{\ensuremath{\,\,\,^{^{\bullet{}}}\!\!\!\!\!\!\!\!}}
\newcommand{\backcirc}{\ensuremath{^{\,\,\circ}\!\!}}
\newcommand{\topcirc}{\ensuremath{\,\,\,^{^{\circ{}}}\!\!\!\!\!\!\!\!}}
\newcommand{\NSM}{\ensuremath{\topcirc\subset}}
\newcommand{\NFM}{\ensuremath{\topbullet\subset}}
\newcommand{\ASM}{\ensuremath{\backcirc\subset^{\circ}}}
\newcommand{\ALM}{\ensuremath{\backbullet\subset^{\circ}}}
\newcommand{\ARM}{\ensuremath{\backcirc\subset^{\bullet}}}
\newcommand{\AFM}{\ensuremath{\backbullet\subset^{\bullet}}}
\newcommand{\CR}{\ensuremath{\leftmodels}\xspace}
\newcommand{\PCR}{\ensuremath{\dashv}\xspace}
\newcommand{\FCR}{\ensuremath{\leftModels}\xspace}
\newcommand{\ie}{i.e., }
\newcommand{\eg}{e.g., }
\newcommand{\Romeo}{\texttt{Romeo}\xspace}
\newcommand{\Juliet}{\texttt{Juliet}\xspace}
\newcommand{\Tybalt}{\texttt{Tybalt}\xspace}
\newcommand{\Montague}{\texttt{Montague}\xspace}
\newcommand{\Mercutio}{\texttt{Mercutio}\xspace}
\newcommand{\male}{\texttt{male}\xspace}
\newcommand{\arcKilled}{\texttt{has} \texttt{killed}\xspace}
\newcommand{\arcCousin}{\texttt{cousin}\xspace}
\newcommand{\arcSuicide}{\texttt{commit} \texttt{suicide}\xspace}
\newcommand{\arcSuicideToo}{\texttt{commit} \texttt{suicide} \texttt{too}\xspace}
\newcommand{\arcFeelings}{\texttt{feelings}\xspace}
\newcommand{\ClassMrMontague}{\texttt{Mr.} \texttt{Mon\-ta\-gue}$^\alpha$\xspace}
\newcommand{\ClassCapulet}{\texttt{Ca\-pu\-let}$^\alpha$\xspace}
\newcommand{\ClassMissCapulet}{\texttt{Miss} \texttt{Capulet}$^\alpha$\xspace}
\newcommand{\arcClassSuicide}{\texttt{commit} \texttt{suicide}$^\alpha$\xspace}
\newcommand{\arcClassKilled}{\texttt{has} \texttt{killed}$^\alpha$\xspace}
\newcommand{\arcClassCousin}{\texttt{cousin}$^\alpha$\xspace}
\newcommand{\arcClassFeelings}{\texttt{feelings}$^\alpha$\xspace}
\newcommand{\propHouse}{\texttt{house}\xspace}
\newcommand{\propKilling}{\texttt{killing}\xspace}
\newcommand{\propSex}{\texttt{sex}\xspace}
\tikzset{
    object/.style={
           rectangle,
           rounded corners=4mm,
           fill=black!10,
           draw=black, very thick,
           minimum height=2em,
           inner sep=1pt,
           text centered,
           font=\small,
           },
    arc/.style={
           rectangle,
           rounded corners=1mm,
           draw=black!50, thick,
           fill=white,
           minimum height=2em,
           inner sep=1pt,
           text centered,
           font=\small,
           },
}
\journal{Information Processing Letters}
\begin{document}
\begin{frontmatter} 
\title{\myTitle{}}

\author[UEP]{Willy Picard}
\ead{picard@kti.ue.poznan.pl}
\address[UEP]{Department of Information Technology,
Pozna\'{n} University of Economics, 
al. Niepodleglo\'{s}ci 10, 61-875 Pozna\'{n}, 
Poland}

\begin{abstract}
Besides the need for a better understanding of networks, there is a need for prescriptive models and tools to specify requirements concerning networks and their associated graph representations. We propose class-based graphs as a means to specify requirements concerning object-based graphs. Various variants of membership are proposed as special relations 
between class-based and object-based graphs at the local level, while various variants of compliance are proposed at the global level.
\end{abstract}

\begin{keyword}
data structures \sep object-based graph \sep class-based graph \sep class membership \sep compliance
\end{keyword}

\end{frontmatter}

\section{Introduction}

During the last decade, the Web has been transformed from a siloed information medium into a highly dynamic network of information, created by individuals and organizations mostly in a participatory manner. This network is currently referred to as the Web 2.0~\cite{blog:oreilly:2005}.
 
A reason for such a transformation may be the better understanding of the structure and the functioning of networks, especially small-world networks~\cite{article:duncan:1998,article:barabasi:1998}.
It has been shown that many networks, from Web pages~\cite{article:albert:1999:web} to food webs~\cite{article:montoya:2002}, from research paper co-authoring~\cite{article:newman:2001} to human brain functional networks~\cite{article:eguiluz:2005}, share a set of common characteristics, \eg their average shortest path is relatively low, while their clustering coefficient is rather higher than in random networks.

Networks have also been studied as regards their dynamics. As an example, network percolation, with its potential application to explain disease epidemics and gossip propagation, has received significant attention~\cite{article:callaway:2000, conf:doerr:2011}.

Social websites, such as Facebook or Twitter, are playing a major role on the Web 2.0. These sites support social networks, \ie networks of individuals and organizations linked by their relationships. Once again, the characteristics of social websites and the social networks they support is the subject of many research works~\cite{article:lewis:2008,article:richter:2011}. 

Besides the descriptive approach of the works mentioned above, prescriptive tools are needed. Not only should networks be understood, but tools to specify constraints on networks are required. With the ubiquity of networks, tools are needed to check if a chosen subset of a given network satisfies a predefined set of constraints. These constraints should concern both the nodes of the network and the arcs between them. To draw an analogy with collective sports, it is important not only to understand how a team is performing, but also to be able to define requirements about the various players and their potential relations. The coach would then be able to check if a given set of players satisfies his/her expectations defined as requirements.

The problem addressed in this paper may be stated as follows: how to specify a type of networks with constraints on both nodes and arcs, and how to define the concept of compliance of a given network with these constraints. Many applications of this problem may be found, such as the establishment of the cast of a movie, the specification of emergency crews, the definition of a set of chemical substances needed for a given chemical reaction, the specification of the set of web services required to implement a given service-oriented application, and the definition of crews in hospitals for surgical operations. 

For the sake of readability and conciseness, an simplified example based on William Shakespeare's tragedy \emph{Romeo and Juliet} is presented in this paper. Besides the criterion of succinctness, the choice of this example is guided by the assumption that the popularity of this play will ease the understanding of the various illustrative networks presented in the rest of the paper, networks consisting of characters from \emph{Romeo and Juliet}. 

Addressing this problem encompasses two main issues. First, the development of tools supporting the definition of constraints on networks implies both a representation for networks and a representation for the constraints. Second, different types of relations between the networks and the constraints may occur and should precisely be defined.

Three IT areas partially address the proposed problem: object-oriented languages, database schemata, and ontologies. Object-oriented languages rely on the concept of class to mo\-del constraints on objects~\cite{book:pierce:2002}. A network of classes may therefore constrain a network of objects. Similarly, database sche\-mata, either in relational, object-oriented, or XML da\-ta\-bases, constraint the da\-ta\-base~\cite{book:dbReadings:2005}. Finally, classes in ontologies constrain individuals~\cite{book:gargouri:2010}. However, in these three approaches, classes and database schema have to exist to be instantiated as objects, data, and individuals. Therefore, classes and database schema have to precede objects. In the case of social networks (or former mentioned collective sports), the network usually exists before the constraints do. Additionally, in these three approaches, a limited support for arcs is proposed. In object-oriented languages, relations between classes are limited to \texttt{has-a} and \texttt{is-a} relations, via class attributes and inheritance. In relational database, the only mechanism to connect relations are joins. Finally, similarly to  the object-oriented approach, the relations between classes in ontologies are limited to class properties and inheritance. It should be possible to specify constraints on arcs in a more subtle manner, encompassing a more complex representation of the arcs among nodes of a network.

In this paper, we propose to represent networks as object-based graphs. Constraints on networks may then be represented as class-based graphs. We further define the concepts of membership and compliance. Membership concerns single objects and classes, while compliance concerns the whole graphs. Various variants of membership and compliance are proposed in this paper. Related works are discussed in Section~\ref{sec:relatedWorks}.

Our main contribution reported here are: a formal definition of object-based and class-based graphs; the identification and formal definition of various types of membership for nodes and arcs; the identification and formal definition of various types of compliance of object-based graphs with class-based graphs.

\section{Object-based graphs}
The concepts of object-based graphs and related class-based graphs are based on the concepts of \emph{object} and \emph{class}. For the sake of precision, clear definitions are mandatory in light of the different meanings of these terms in various research communities.

An \emph{object} is a set of properties $o=\{p\}$. A \emph{property} $p$ is a pair $\langle n,v_n \rangle$, where $n$ is the name of the property and $v_n$ is the value of the property. The value of a property may be a literal or an object.

An \emph{object-based node}, denoted $n$, is an object that does not contain properties named neither \texttt{src} nor \texttt{dst}. 

Note that the property name \texttt{src} (resp.~\texttt{dst}) is reserved to the source (resp.~destination) of arcs.

An \emph{object-based arc}, denoted $a$, is an object  that contains at least two properties named \texttt{src} and \texttt{dst} whose values are object-based nodes, \ie 
$\exists \big(\langle \texttt{src}, n_{\texttt{src}} \rangle, 
\langle \texttt{dst}, n_{\texttt{dst}} \rangle \big) \in a \times a$, with $n_{\texttt{src}}$ and $n_{\texttt{dst}}$ being object-based nodes.

\begin{definition}{Object-Based Graph}
An \emph{object-based graph} $g= \langle N, A \rangle$ is a graph whose nodes are object-based nodes, and arcs are object-based arcs, with values of the properties named \texttt{src} and \texttt{dst} being nodes of the graph, \ie
$\forall a \in A$, $\langle \texttt{src}$, $n_{\texttt{src}} \rangle \in a \Rightarrow n_{\texttt{src}} \in N$, and $\forall a \in A$, $\langle \texttt{dst}$, $n_{\texttt{dst}} \rangle \in a \linebreak[0]\Rightarrow n_{\texttt{dst}} \in N$.
\end{definition}

\sloppy
Object-based arcs are further denoted $a = \langle n_{\texttt{src}},$ $n_{\texttt{dst}},P\rangle$, where $n_{\texttt{src}}$ is the value of the property named \texttt{src}, $n_{\texttt{dst}}$ is the value of the property named \texttt{dst}, and $P$ is the set of remaining properties of the arc.
\fussy

\begin{figure}[htp]
	\centering
\begin{tikzpicture}[->,>=stealth']

	\node[object] (O1)
	{   \begin{tabular}{l} 	
		  \textbf{Romeo}\\
		  \hspace{0.2cm}{$\langle$name, Romeo$\rangle$}\\
		  \hspace{0.2cm}{$\langle$house, Montague$\rangle$}\\
      \hspace{0.2cm}{$\langle$sex, male$\rangle$}\\
		  \end{tabular}
 	};

	\node[object,
	above right of=O1,
  yshift=+2cm,
  xshift=+2cm,
  anchor=center] (O2)
	{
 \begin{tabular}{l} 	
  \textbf{Tybalt}\\
  \hspace{0.2cm}{$\langle$name, Tybalt$\rangle$}\\
  \hspace{0.2cm}{$\langle$house, Capulet$\rangle$}\\
  \hspace{0.2cm}{$\langle$sex, male$\rangle$}\\
 \end{tabular}
 };

	\node[object,
	below right of=O2,
  yshift=-2.2cm,
  xshift=2cm,
  anchor=center] (O3)
	{
 \begin{tabular}{l} 	
  \textbf{Juliet}\\
  \hspace{0.2cm}{$\langle$name, Juliet$\rangle$}\\
  \hspace{0.2cm}{$\langle$house, Capulet$\rangle$}\\
  \hspace{0.2cm}{$\langle$sex, female$\rangle$}\\
  \hspace{0.2cm}{$\langle$age, 13$\rangle$}
 \end{tabular}
 };

 \path 
 		(O1) 	edge[bend left=20]  node[arc,above left=-5mm]{ \begin{tabular}{l}
																														\textit{has killed}\\
																														$\langle$duel, refused$\rangle$\\
																														$\langle$killing, sword$\rangle$
																														\end{tabular}}(O2)
 		(O2) 	edge[bend left=20]  node[arc,above right=-5mm]{\begin{tabular}{l}
																														\textit{cousin}\\
																														$\langle$sibling, cousin$\rangle$\\
																														\end{tabular}}(O3)
		(O1) 	edge[bend left=12]  node[arc,above=1mm]{\begin{tabular}{l}
																														\textit{feelings}\\
																														$\langle$feels,Love$\rangle$\\
																														\end{tabular}}
		(O3) 	
		
		(O3) edge[bend left=12] node[]{} (O1)

		(O1)  edge[loop below,min distance=11mm]    node[arc,below](suic1){\begin{tabular}{l}
																														\textit{commit suicide}\\
																														$\langle$killing, poison$\rangle$
																														\end{tabular}} 
		(O1)
 																														
		(O3)  edge[loop below,min distance=5mm]    node[arc,below]{\begin{tabular}{l}
																														\textit{commit suicide too}\\
																														$\langle$killing, dagger$\rangle$
																														\end{tabular}} 
		(O3)
;
\end{tikzpicture}
	\caption{Example of an object-based graph.}
	\label{fig:objectGraph}
\end{figure}
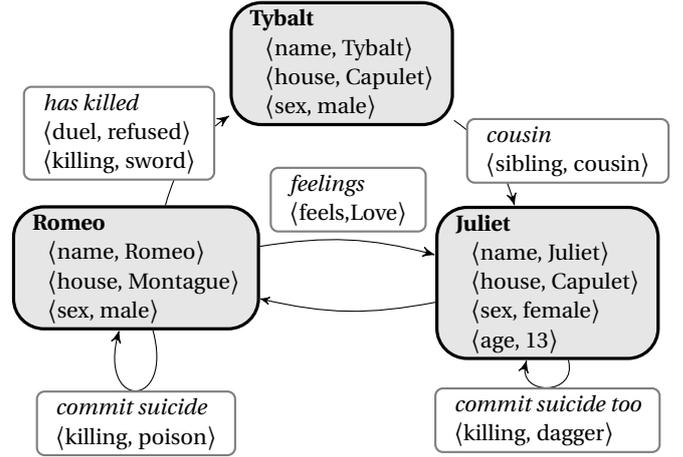

\sloppy
An example of an object-based graph is illustrated in Fig.~\ref{fig:objectGraph}. In this example, the graph consists of three object-based nodes---\Romeo, \Tybalt, and \Juliet---and six object-based arcs---\arcKilled, \arcCousin, \arcSuicide, \arcSuicideToo and twice \arcFeelings. Nodes are represented as greyed rectangles, while arcs are rep\-re\-sent\-ed as arrows and associated white rectangles. Therefore, the node \Romeo consists of three properties $\langle$\texttt{name}, $\Romeo\rangle$, $\langle$\propHouse, $\Montague\rangle$, and $\langle$\propSex, $\male\rangle$. The arc \arcKilled connects the node \Romeo with the node \Tybalt, and consists of two properties $\langle$\texttt{duel}, $\texttt{refused}\rangle$ and $\langle$\propKilling, $\texttt{sword}\rangle$.

\fussy
The arc \arcSuicide is an example of a loop arc connecting the node \Romeo to itself.

It may be noted that a property named \propHouse is an element of the nodes \Romeo, \Tybalt and \Juliet. It has a different value for \Romeo and \Juliet, and a common value for \Tybalt and \Juliet. Similarly, properties named \propKilling are elements of the arcs \arcKilled and \arcSuicide, but with different values.

\section{Class-based graphs}

A \emph{class} is a set of property constraints $c=\{p^\alpha\}$.
A \emph{property constraint} $p^\alpha$ is a pair $\langle n, v^\alpha_n \rangle$, where $n$ is the name of the properties potentially constrained by $p^\alpha$, and $v^\alpha_n$ is a predicate.

A property $p=\langle n,v_n \rangle$ \emph{satisfies} a property constraint $p^\alpha=\langle n', v^\alpha_{n'} \rangle$, denoted $p \succ p^\alpha$, iff $n=n'$ and $v^\alpha_{n'}(v_n)=\texttt{true}$.

Note the `$\alpha$' letter that indicates the class-related character of its associated entities.

\begin{definition}{Class Instance}
An object $o=\bigl\{p=\langle n,v_n \rangle\bigr\}$ is an \emph{instance} of a class $c=\bigl\{p^\alpha=\langle n,v^\alpha_n \rangle\bigr\}$, denoted $o \sqsubset c$,  iff $\forall p^\alpha \in c, \exists p \in o : p \succ p^\alpha$. The $\sqsubset$ predicate is further referred to as \texttt{instanceOf}.
\end{definition}

A \emph{class-based node} $n^\alpha$ is a class that does not contain property constraints named neither \texttt{src} nor \texttt{dst}.

A \emph{class-based arc} $a^\alpha$ is a class containing two property constraints respectively named \texttt{src} and \texttt{dst} whose values are \texttt{in\-stanceOf} predicates associated with source and destination class-based nodes, \ie 
$\exists \big( \langle \texttt{src}$, $\sqsubset n^\alpha_{\texttt{src}}\rangle$, $\langle \texttt{dst}$, $\sqsubset n^\alpha_{\texttt{dst}} \rangle \big)$ $\in a^\alpha \times a^\alpha$, with $n^\alpha_{\texttt{src}}$ and $n^\alpha_{\texttt{dst}}$ being class-based nodes.

\begin{definition}{Class-Based Graph}
A \emph{class-based graph} $g^\alpha=\langle N^\alpha$, $A^\alpha \rangle$ is a graph whose nodes are class-based nodes and arcs are class-based arcs, where values of the property  constraints named \texttt{src} and \texttt{dst} being \texttt{instanceOf} predicates with nodes of the graph, \ie 
$\forall a^\alpha \in A^\alpha$, $\langle \texttt{src}$, $\sqsubset n^\alpha_{\texttt{src}}\rangle \in a^\alpha \Rightarrow n^\alpha_{\texttt{src}} \in N^\alpha$ and
$\forall a^\alpha \in A^\alpha$, $\langle \texttt{dst}$, $\sqsubset n^\alpha_{\texttt{dst}}\rangle \in a^\alpha \Rightarrow n^\alpha_{\texttt{dst}} \in N^\alpha$.
\end{definition}

Arcs in class-based graphs are further denoted $a^\alpha=\langle n^\alpha_{\texttt{src}}$, $n^\alpha_{\texttt{dst}}$, $P^\alpha\rangle$, where the predicate $\sqsubset n^\alpha_{\texttt{src}}$ is the value of the property constraint named \texttt{src}, the predicate $\sqsubset n^\alpha_{\texttt{dst}}$ is the value of the property constraint named \texttt{dst}, and $P^\alpha$ is the set of remaining property constraints of the arc.

\begin{figure}[htp]
	\centering
\begin{tikzpicture}[->,>=stealth']

	\node[object] (C1)
	{   \begin{tabular}{l} 	
		  \textbf{Mr. Montague$^\alpha$}\\
		  \hspace{0.2cm}{$\langle$house, =Montague$\rangle$}\\
  		\hspace{0.2cm}{$\langle$sex, =male$\rangle$}\\
		  \end{tabular}
 	};

	\node[object,
	above right of=C1,
  yshift=+2cm,
  xshift=+1.9cm,
  anchor=center] (C2)
	{
 \begin{tabular}{l} 	
  \textbf{Capulet$^\alpha$}\\
  \hspace{0.2cm}{$\langle$house, =Capulet$\rangle$}
 \end{tabular}
 };

	\node[object,
	below right of=C2,
  yshift=-2cm,
  xshift=+2cm,
  anchor=center] (C3)
	{
 \begin{tabular}{l} 	
  \textbf{Miss Capulet$^\alpha$}\\
  \hspace{0.2cm}{$\langle$house, =Capulet$\rangle$}\\
  \hspace{0.2cm}{$\langle$sex, =female$\rangle$}\\
 \end{tabular}
 };

 \path 
 		(C1) 	edge[bend left=20]  node[arc,above left=-6mm]{ \begin{tabular}{l}
																														\textit{has killed$^\alpha$}\\
																											$\langle$killing,=\texttt{true}$\rangle$\\
																														\end{tabular}}
		(C2)
		
 		(C2) 	edge[bend left=20]  node[arc,above right=-6mm]{\begin{tabular}{l}
																														\textit{cousin$^\alpha$}\\
																														$\langle$sibling,=cousin$\rangle$\\
																														\end{tabular}}
		(C3)

		(C1) 	edge[bend left=12]  node[arc,above=1mm]{\begin{tabular}{l}
																														\textit{feelings$^\alpha$}\\
																														$\langle$feels,=Love$\rangle$\\
																														\end{tabular}}
		(C3) 	
		
		(C3) edge[bend left=12] node[]{} (C1)

		(C1)  edge[loop below,min distance=5mm]    node[arc,below]{\begin{tabular}{l}
																														\textit{commit suicide$^\alpha$}\\
																											$\langle$killing,=\texttt{true}$\rangle$\\
																														\end{tabular}} 
		(C1)
 																														;

\end{tikzpicture}
	\caption{Example of a class-based graph.}
	\label{fig:classGraph}
\end{figure}
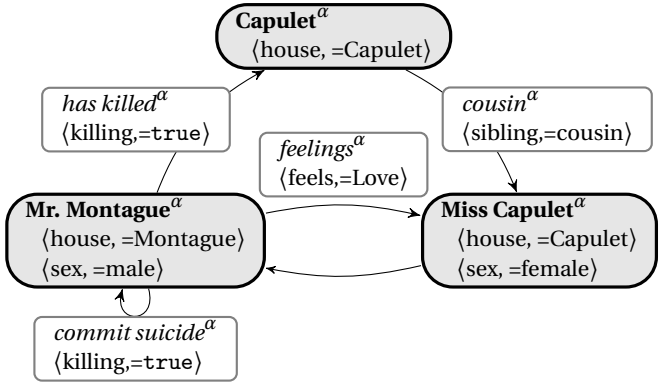

\sloppy
An example of a class-based graph is illustrated in Fig.~\ref{fig:classGraph}. In this example, the graph consists of three class-based nodes---\ClassMrMontague, \ClassCapulet, and \ClassMissCapulet---and five class-based arcs---\arcClassSuicide, \arcClassKilled, \arcClassCousin, and twice \arcClassFeelings. Nodes are represented as greyed rectangles, while arcs are rep\-re\-sent\-ed as arrows and associated white rectangles. Therefore, the node \ClassMrMontague consists of two property constraints $\langle$\propHouse, $=\Montague\rangle$ and $\langle$\propSex, $=\male\rangle$. The arc \arcClassKilled connects the node \ClassMrMontague with the node \ClassCapulet, and consists of one property constraint $\langle$\propKilling, $=\texttt{true}\rangle$.

The object \Romeo is an instance of the class \ClassMrMontague. The two property constraints defined in the class are satisfied by the properties of the object: Romeo is indeed from the Montague house and he is a male. One may notice that the property named \texttt{name}, defined for \Romeo, plays no role when verifying if is an instance of the class \ClassMrMontague. In a similar manner, the property named \texttt{duel} in the \arcKilled object does not play a role in this object being an instance of the class \arcClassKilled.

The object \Juliet is an instance of both the class \ClassCapulet and the class \ClassMissCapulet, while \Tybalt is only an instance of the class \ClassCapulet, and not \ClassMissCapulet (not being a female). One may notice the lack of an \arcClassSuicide related to the class \ClassMissCapulet as deviation from Shakespeare's tragedy. It is a voluntary omission, to illustrate concepts presented in Section~\ref{sec:memberships}.

Finally, one may note that the set of property constraints of the arcs \arcClassKilled and \arcClassSuicide are similar. Even their source nodes, \ie the node associated with the \texttt{src} property constraint, are similar. These two arcs differ only in their destination nodes, \ie the node associated with the \texttt{dst} property constraint.
\fussy

\section{Membership(s)}
\label{sec:memberships}
Although being sufficient to formally define class-based graphs, the concept of class instance does not capture important relationships between class-based and object-based graphs. The concept of membership, in its various variants, is proposed in this section as a means to describe particular local relationships between classes and object in graphs.

\begin{definition}{Node Strict Membership}
A node $n$ is a \emph{strict member} of a class $n^\alpha$, denoted $n \NSM n^\alpha$, iff $n$ is an instance of $n^\alpha$, i.e. $n \sqsubset n^\alpha$.
\end{definition}

Being a class strict member is equivalent with being a class instance. Therefore, \Romeo is a strict member of the \ClassMrMontague class.

Membership of an arc $a = \langle n_{\texttt{src}}, n_{\texttt{dst}}$, $\{p\}\rangle$ to a class may be defined in various ways, depending on the elements of the arcs taken into account. \emph{Arc strict membership}, denoted \ASM, takes into account only the properties $\{p\}$. \emph{Arc left} (resp. \emph{right}) \emph{membership}, denoted \ALM (resp. \ARM), takes into account the properties and the source node (resp. the destination node). Finally, \emph{arc full membership}, denoted \AFM, takes into account the properties and both the source and the destination nodes.

\begin{definition}{Arc Strict Membership}
An arc $a = \langle n_{\texttt{src}},$ $n_{\texttt{dst}}$, $\{p\}\rangle$ is a \emph{strict member} of the class $a^\alpha=\langle n^\alpha_{\texttt{src}}, n^\alpha_{\texttt{dst}},$ $\{p^\alpha\}\rangle$, denoted $a \ASM a^\alpha$, iff $\forall p^\alpha \in a^{\alpha}, \;\exists p\in a$ such that $p \succ p^\alpha$.
\end{definition}

Note that if an arc $a$ is an instance of a class $a^\alpha$, then $a$ is a strict member of $a^\alpha$ as all the properties constraints of $a^\alpha$ are satisfied by the properties of $a$. The opposite is not always true: if an arc $a$ is a strict member of a class $a^\alpha$, then $a$ does not have to be an instance of $a^\alpha$ because the values of the properties \texttt{src} and \texttt{dst} do not have to satisfy the property constraints with the same names. 

Formally,
\begin{equation}
a \sqsubset a^\alpha \quad\Longrightarrow\quad a \ASM a^\alpha.
\label{eq:membershipImpliesInstance}
\end{equation}

\sloppy
The \arcKilled object is an instance of \arcClassKilled. Therefore, \arcKilled is also a strict member of \arcClassKilled. However, although the \arcKilled object is a strict member of the \arcClassSuicide class, it is not an instance of this class: each instance of the \arcClassSuicide class requires two \ClassMrMontague instances as source and destination object.
\fussy

\begin{definition}{Arc Left Membership}
An arc $a=\langle n_{\texttt{src}}$, $n_{\texttt{dst}}$, $\{p\}\rangle$ is a \emph{left member} of the class $a^\alpha=\langle n^\alpha_{\texttt{src}}$, $n^\alpha_{\texttt{dst}}$, $\{p^\alpha\}\rangle$, denoted $a \ALM a^\alpha$, iff 
	$a$ is a strict member of $a^\alpha$ and 
	$n_{\texttt{src}}$ is a strict member of $n^\alpha_{\texttt{src}}$, \ie 
	\begin{equation*}
	a \ALM a^\alpha \quad\Longleftrightarrow\quad a \ASM a^\alpha \;\wedge\; 
	n_{\texttt{src}} \NSM n^\alpha_{\texttt{src}}\;.
	\end{equation*}
\end{definition}

The \arcKilled arc is not only a strict member of \arcClassSuicide, it is also a left member as the source node of \arcKilled, \ie \Romeo, is an instance of the source node of \arcClassSuicide, \ie the \ClassMrMontague class.

\begin{definition}{Arc Class Right Membership}
An arc $a=\langle n_{\texttt{src}}$, $n_{\texttt{dst}}$, $\{p\}\rangle$ is a \emph{right member} of the class $a^\alpha=\langle n^\alpha_{\texttt{src}}$, $n^\alpha_{\texttt{dst}}$, $\{p^\alpha\}\rangle$, 
denoted $a \ARM a^\alpha$, iff 
$a$ is a strict member of~$a^\alpha$ and 
$n_{\texttt{dst}}$ is a strict member of $n^\alpha_{\texttt{dst}}$, \ie
	\begin{equation*}
	a \ARM a^\alpha \quad\Longleftrightarrow\quad a \ASM a^\alpha \;\wedge\; 
	n_{\texttt{dst}} \NSM n^\alpha_{\texttt{dst}}\;.
	\end{equation*}
\end{definition}

\sloppy
The \arcKilled arc is not a right member of \arcClassSuicide. The destination node of the arc \arcKilled, \ie \Tybalt, is not a strict member of the destination node of \arcClassSuicide, \ie the \ClassMrMontague class. 
\fussy

\begin{definition}{Arc Full Membership}
An arc $a=\langle n_{\texttt{src}}$, $n_{\texttt{dst}}$, $\{p\}\rangle$ is a \emph{full member} of the class $a^\alpha=\langle n^\alpha_{\texttt{src}}$, $n^\alpha_{\texttt{dst}}$, $\{p^\alpha\}\rangle$, 
denoted $a \AFM a^\alpha$, iff $a$ is a left and right member of $a^\alpha$, \ie 
	\begin{equation*}
	a \AFM a^\alpha \quad\Longleftrightarrow\quad a \ALM a^\alpha \;\wedge\; a \ARM a^\alpha.
	\end{equation*}
\end{definition}

The \arcKilled arc is not a full member of the \arcClassSuicide class as it is not a left member of this class. The \arcSuicide arc is an example of a full member of the \arcClassSuicide class.

\newtheorem{theorem}{Theorem}
\newtheorem{claim}[theorem]{Claim}
\begin{claim}
An arc $a$ is a full member of a class $a^\alpha$ iff $a$ is an instance of $a^\alpha$. 
\end{claim}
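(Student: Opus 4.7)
The plan is a direct unfolding of definitions, using the observation that the only thing separating strict arc membership from full class instantiation is the treatment of the \texttt{src} and \texttt{dst} property constraints, and that exactly these two extra requirements are what left- and right-membership add on top of strict membership.

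For the forward direction, I will assume $a \AFM a^\alpha$ and show $a \sqsubset a^\alpha$. By definition of full membership, this gives $a \ASM a^\alpha$ together with $n_{\texttt{src}} \NSM n^\alpha_{\texttt{src}}$ and $n_{\texttt{dst}} \NSM n^\alpha_{\texttt{dst}}$. Strict membership handles every property constraint of $a^\alpha$ other than those named \texttt{src} and \texttt{dst}. For the two remaining ones, node strict membership unfolds to $n_{\texttt{src}} \sqsubset n^\alpha_{\texttt{src}}$ and $n_{\texttt{dst}} \sqsubset n^\alpha_{\texttt{dst}}$, which is exactly the statement that the \texttt{src} and \texttt{dst} properties of $a$ satisfy (in the sense of $\succ$) the homonymous property constraints of $a^\alpha$, since those constraints carry the predicates $\sqsubset n^\alpha_{\texttt{src}}$ and $\sqsubset n^\alpha_{\texttt{dst}}$ as their values. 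Hence every property constraint in $a^\alpha$ is witnessed by a property of $a$, which is the definition of $a \sqsubset a^\alpha$.

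For the converse, I will assume $a \sqsubset a^\alpha$. Equation~(\ref{eq:membershipImpliesInstance}) already yields $a \ASM a^\alpha$. Moreover, applying the instance condition to the two distinguished property constraints of $a^\alpha$ named \texttt{src} and \texttt{dst} gives properties in $a$ that satisfy them; unfolding $\succ$ for these property constraints rephrases this as $n_{\texttt{src}} \sqsubset n^\alpha_{\texttt{src}}$ and $n_{\texttt{dst}} \sqsubset n^\alpha_{\texttt{dst}}$, i.e., $n_{\texttt{src}} \NSM n^\alpha_{\texttt{src}}$ and $n_{\texttt{dst}} \NSM n^\alpha_{\texttt{dst}}$. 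Combined with strict membership, this yields both $a \ALM a^\alpha$ and $a \ARM a^\alpha$, and therefore $a \AFM a^\alpha$.

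There is no genuine obstacle here; the argument is essentially bookkeeping. The only delicate point is being careful about the scope of the universal quantifier in the definition of strict arc membership: the text of the paper, together with the explicit remark that the converse of~(\ref{eq:membershipImpliesInstance}) fails precisely because of the \texttt{src} and \texttt{dst} constraints, makes clear that strict membership ranges over the \emph{remaining} property constraints $\{p^\alpha\}$, so that exactly those two constraints are handled by the node strict membership conditions added by left- and right-membership. Making this interpretation explicit in the proof is what prevents a spurious trivialisation of the claim.
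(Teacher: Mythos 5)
Your proof is correct and follows essentially the same route as the paper's: both directions proceed by unfolding full membership into strict membership plus the two node strict memberships, and identifying the latter with satisfaction of the \texttt{src} and \texttt{dst} property constraints. Your explicit unfolding of the $\succ$ predicate for those two constraints, and your remark that the quantifier in arc strict membership must range only over the remaining constraints $\{p^\alpha\}$, make precise a step the paper's proof leaves implicit, but the argument is the same.
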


\begin{proof}
\noindent First, if an arc $a$ is a full member of a class $a^\alpha$, then all property constraints are satisfied by the properties ($a \ASM a^\alpha$). Additionally, from the definitions on left and right membership, the values of the properties \texttt{src} and \texttt{dst} statisfy the property constraints with the same name ($n_{\texttt{dst}} \NSM n^\alpha_{\texttt{dst}}$ and $n_{\texttt{src}} \NSM n^\alpha_{\texttt{src}}$). Therefore,
\begin{equation}
a \AFM a^\alpha \Rightarrow a \sqsubset a^\alpha.
\label{eq:fullMembershipImpliesInstance}
\end{equation}

Next, if an arc $a$ is an instance of a class $a^\alpha$, then $a$ is a strict member of $a^\alpha$ (from Eq.~\ref{eq:membershipImpliesInstance}). Additionally, the values of the properties \texttt{src} and \texttt{dst} satisfy the property constraints with the same name ($n_{\texttt{dst}} \NSM n^\alpha_{\texttt{dst}}$ and $n_{\texttt{src}} \NSM n^\alpha_{\texttt{src}}$). Therefore,
\begin{equation*}
a \sqsubset a^\alpha \Rightarrow 
		a \ASM a^\alpha \;\wedge\; 
		n_{\texttt{dst}} \NSM n^\alpha_{\texttt{dst}} \;\wedge\; 
		n_{\texttt{src}} \NSM n^\alpha_{\texttt{src}}\text{, \ie}
\end{equation*}
\begin{equation}
a \sqsubset a^\alpha \Rightarrow a \AFM a^\alpha.
\label{eq:instanceImpliesfullMembership}
\end{equation}

Finally, from Eqs.~\ref{eq:fullMembershipImpliesInstance} and \ref{eq:instanceImpliesfullMembership},
\begin{equation*}
a \AFM a^\alpha \quad\Longleftrightarrow\quad a \sqsubset a^\alpha.
\label{eq:fullMembershipEquivalentInstance}
\end{equation*}
\end{proof}

\begin{definition}{Node Relational Membership}
A node $n$ is a \emph{relational member} of a class $n^\alpha$, denoted $n \NFM n^\alpha$, iff 
\renewcommand{\labelenumi}{(\arabic{enumi})}
\begin{enumerate}
	\item  $n$ is a strict member of $n^\alpha$, 
	\item for each class-based arc starting from $n^\alpha$ (\ie a class whose value of the property constraint \texttt{src} is $\sqsubset n^\alpha$), at least one arc starting from $n$ (\ie an arc whose value of the property \texttt{src} is $n$) is a left member of $n^\alpha$, and 
	\item for each class-based arcs leading to $n^\alpha$ (\ie a class whose value of the property constraint \texttt{dst} is $\sqsubset n^\alpha$), at least one arc leading to $n$ (\ie an arc whose value of the property \texttt{dst} is $n$) is a right member of $n^\alpha$.
\end{enumerate}
\end{definition}

\noindent Formally, $n \NFM n^\alpha$ iff
$$
	\begin{cases}
	1) & n \NSM n^\alpha,\\
  2) & \forall a^\alpha=\langle n^\alpha,{n^\alpha}', \{p^\alpha\} \rangle,\; \exists\, a=\langle n,n',\{p\}\rangle \;:\; a \ALM a^\alpha,\\
  3) & \forall a^\alpha=\langle {n^\alpha}',n^\alpha, \{p^\alpha\}\rangle,\; \exists\, a=\langle n',n,\{p\}\rangle \;:\; a \ARM a^\alpha.	
	\end{cases}
$$

\sloppy
\Juliet is a relational member of \ClassMissCapulet. First, \Juliet is a strict member of \ClassMissCapulet. Second, the arc \arcFeelings starting from \Juliet is a left member of \arcClassFeelings, the only arc starting from \ClassMissCapulet. Third, the arc \arcCousin starting from \Juliet is a right member of \arcClassFeelings, the only arc leading to \ClassMissCapulet. Note that the \arcSuicideToo arc is meaningless as regards class relational membership of \Juliet.

\Juliet is not a relational member of \ClassCapulet. Although \Juliet is a strict member of \ClassCapulet, no arc starting from \Juliet is a left member of \arcClassCousin and no arc leading to \Juliet is a right member of \arcClassKilled. 
\fussy

\section{Compliance(s)}

Based on the membership relations defined above, the concept of compliance of an object-based graph with a class-based graph may be defined. Membership relations are ``local'', as they concern a given node. Compliance concerns whole object-based and class-based graphs, and may therefore be considered as ``global''.

An object-based graph is compliant with a given class-based graph if the constraints on the nodes and the arcs among them, \ie constraints defined in the class-based graph, are satisfied by the given object-based graph. As formally presented below, various levels of compliance may be dis\-tin\-guished.

\begin{definition}{Compliance Relation}
Consider an object-based graph $g =  \langle N, A \rangle$ and a class-based graph $g^\alpha = \langle N^\alpha, A^\alpha \rangle$. 
A \emph{compliance relation} \CR is a relation  on $N \times N^\alpha$ such that
\begin{equation}
\label{eq:complianceRelation1}
\forall(n,n^\alpha)\in N\times N^\alpha,\quad n \CR n^\alpha \Rightarrow n \NFM n^\alpha,
\end{equation}
\begin{align}
\label{eq:complianceRelation2}
\forall (a^\alpha=<n^\alpha_{\texttt{src}}, n^\alpha_{\texttt{dst}}, \{p^\alpha\}>)\in A^\alpha,\qquad\qquad\nonumber\\
 \forall (n_{\texttt{src}},n_{\texttt{dst}})\in N\times N : 
 		n_{\texttt{src}} \CR n^\alpha_{\texttt{src}},  
 		n_{\texttt{dst}} \CR n_{\texttt{dst}}^\alpha, \qquad\qquad\nonumber\\
 \exists (a=<n_{\texttt{src}}, n_{\texttt{dst}}, \{p\}>) \in A \;:\; a \AFM a^\alpha,
 \end{align}
\begin{equation}
\label{eq:complianceRelation3}
\forall n^\alpha \in N^\alpha, \exists n \in N \;:\; n \CR n^\alpha. 
\end{equation}
\end{definition}

First, the compliance of a node $n$ with a class $n^\alpha$ implies that the node $n$ is a relational member of the class $n^\alpha$ (cf.~Eq.~\ref{eq:complianceRelation1}). Second, for each class-based arc $a^\alpha$ between two classes $n^\alpha_{\texttt{src}}$ and $n^\alpha_{\texttt{dst}}$, for each objects $n_{\texttt{src}}$ and $n_{\texttt{dst}}$ being compliant with $n^\alpha_{\texttt{src}}$ and $n^\alpha_{\texttt{dst}}$, respectively, there exists an arc $a$ between $n_{\texttt{src}}$ and $n_{\texttt{dst}}$ that is a full member of $a^\alpha$ (cf. Eq.~\ref{eq:complianceRelation2}). Third, for each class $n^\alpha$, at least one object $n$ is compliant with the class (cf. Eq.~\ref{eq:complianceRelation3}).

\begin{definition}{Compliance with a class-based graph}\label{def:complianceRelation}
An ob\-ject-based graph $g = \langle N, A \rangle$ is \emph{compliant} with a class-based graph $g^\alpha = \langle N^\alpha, A^\alpha \rangle$, denoted $g \CR g^\alpha$, iff there exists a compliance relation \CR on $N \times N^\alpha$.
\end{definition}

\begin{figure}[htp]
	\centering
\begin{tikzpicture}[->,>=stealth']

	\node[object] (O1)
	{   \begin{tabular}{l} 	
		  \textbf{Romeo}\\
		  \hspace{0.2cm}{$\langle$name, Romeo$\rangle$}\\
		  \hspace{0.2cm}{$\langle$house, Montague$\rangle$}\\
      \hspace{0.2cm}{$\langle$sex, male$\rangle$}\\
		  \end{tabular}
 	};

	\node[object,
	above right of=O1,
  yshift=+2cm,
  xshift=+2cm,
  anchor=center] (O2)
	{
 \begin{tabular}{l} 	
  \textbf{Mercutio}\\
  \hspace{0.2cm}{$\langle$name, Mercutio$\rangle$}\\
  \hspace{0.2cm}{$\langle$house, Verona$\rangle$}\\
  \hspace{0.2cm}{$\langle$sex, male$\rangle$}\\
 \end{tabular}
 };

	\node[object,
	below right of=O2,
  yshift=-2.2cm,
  xshift=2cm,
  anchor=center] (O3)
	{
 \begin{tabular}{l} 	
  \textbf{Juliet}\\
  \hspace{0.2cm}{$\langle$name, Juliet$\rangle$}\\
  \hspace{0.2cm}{$\langle$house, Capulet$\rangle$}\\
  \hspace{0.2cm}{$\langle$sex, female$\rangle$}\\
  \hspace{0.2cm}{$\langle$age, 13$\rangle$}
 \end{tabular}
 };

 \path 
 		(O1) 	edge[bend left=20]  node[arc,above left=-5mm]{ \begin{tabular}{l}
																														\textit{friend}\\
																														$\langle$friend, xxx$\rangle$\\
																														\end{tabular}}(O2)
		(O1) 	edge[bend left=12]  node[arc,above=1mm]{\begin{tabular}{l}
																														\textit{feelings}\\
																														$\langle$feels,Love$\rangle$\\
																														\end{tabular}}
		(O3) 	
		
		(O3) edge[bend left=12] node[]{} (O1)

		(O1)  edge[loop below,min distance=11mm]    node[arc,below](suic1){\begin{tabular}{l}
																														\textit{commit suicide}\\
																														$\langle$killing, poison$\rangle$
																														\end{tabular}} 
		(O1)
 																														
		(O3)  edge[loop below,min distance=5mm]    node[arc,below]{\begin{tabular}{l}
																														\textit{commit suicide too}\\
																														$\langle$killing, dagger$\rangle$
																														\end{tabular}} 
		(O3)
;
\end{tikzpicture}
	\caption{Example of an object-based graph partially compliant with the class-based graph presented in Fig.~\ref{fig:classGraph}.}
	\label{fig:objectGraphCompliant}
\end{figure}
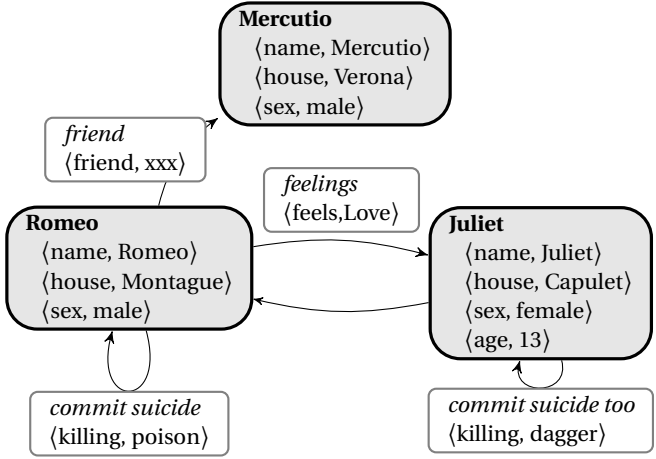

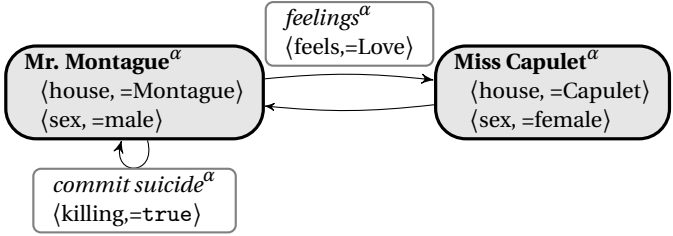
\begin{figure}[htp]
	\centering
\begin{tikzpicture}[->,>=stealth']

	\node[object] (C1)
	{   \begin{tabular}{l} 	
		  \textbf{Mr. Montague$^\alpha$}\\
		  \hspace{0.2cm}{$\langle$house, =Montague$\rangle$}\\
  		\hspace{0.2cm}{$\langle$sex, =male$\rangle$}\\
		  \end{tabular}
 	};

	\node[object,
	right of=C1,
  xshift=+4.5cm,
  anchor=center] (C3)
	{
 \begin{tabular}{l} 	
  \textbf{Miss Capulet$^\alpha$}\\
  \hspace{0.2cm}{$\langle$house, =Capulet$\rangle$}\\
  \hspace{0.2cm}{$\langle$sex, =female$\rangle$}\\
 \end{tabular}
 };

 \path 
		(C1) 	edge[bend left=6]  node[arc,above=1mm]{\begin{tabular}{l}
																														\textit{feelings$^\alpha$}\\
																														$\langle$feels,=Love$\rangle$\\
																														\end{tabular}}
		(C3) 	
		
		(C3) edge[bend left=6] node[]{} (C1)

		(C1)  edge[loop below,min distance=5mm]    node[arc,below]{\begin{tabular}{l}
																														\textit{commit suicide$^\alpha$}\\
																											$\langle$killing,=\texttt{true}$\rangle$\\
																														\end{tabular}} 
		(C1)
 																														;

\end{tikzpicture}
	\caption{Example of a class-based graph with which the object-based graph presented in Fig.~\ref{fig:objectGraphCompliant} is compliant.}
	\label{fig:classGraphCompliant}
\end{figure}

To illustrate compliance, consider the object-based graph presented in Fig.~\ref{fig:objectGraphCompliant} and the class-based graph presented in Fig.~\ref{fig:classGraphCompliant}. The relation \CR, such that \Romeo \CR \ClassMrMontague and \Juliet \CR \ClassMissCapulet, is a compliance relation.
First, \Romeo (resp. \Juliet) is a relational member of \ClassMrMontague (resp. \ClassMissCapulet). Second, for all class-based arcs (\arcClassFeelings and \arcClassSuicide), full member arcs (\arcFeelings and \arcSuicide) exist. Finally, there is no class without a compliant object.

Note that the \texttt{Mercutio} node is meaningless as regards compliance of the two considered graphs. Therefore, additional nodes may be added to the object-based graph without changing its compliance with the class-based graph. A similar remark concerns arcs, such as \arcSuicideToo.

\begin{definition}{Partial compliance relation}
Consider a class-based graph $g^\alpha=\langle N^\alpha, A^\alpha\rangle$ and an object-based graph $g = \langle N, A\rangle$. 
A \emph{partial compliance relation} \PCR on $N \times N^\alpha$ is a relation that satisfies only the conditions of Eqs.~\ref{eq:complianceRelation1} and \ref{eq:complianceRelation2}, the condition of Eq.~\ref{eq:complianceRelation3} being relaxed.
\end{definition}

\begin{definition}{Partial compliance with a class-based graph}\label{def:partialComplianceRelation}
An object-based graph $g=\langle N, A\rangle$ is \emph{partially compliant} with a class-based graph $g^\alpha=\langle N^\alpha, A^\alpha \rangle$, denoted $g \PCR g^\alpha$, iff there exists a partial compliance relation \PCR on $N \times N^\alpha$.
\end{definition}

To illustrate partial compliance, consider the object-based graph presented in Fig.~\ref{fig:objectGraphCompliant} and the class-based graph presented in Fig.~\ref{fig:classGraph}. The relation \PCR, such that \Romeo \PCR \ClassMrMontague and \Juliet \PCR \ClassMissCapulet, is a partial compliance relation.

First, \Romeo (resp. \Juliet) is a relational member of \ClassMrMontague (resp. \ClassMissCapulet). Second, for all class-based arcs (\arcClassFeelings and \arcClassSuicide), full member arcs (\arcFeelings and \arcSuicide) exist. However, there is no node compliant with the class \ClassCapulet.

The difference between partial compliance and normal compliance is the relaxation of Eq.~~\ref{eq:complianceRelation3}. Therefore, in the normal compliance case, for each class, there should be at least one object being a member of this class, while in a partial compliance case, some class may not have any relational member object, \eg the class \ClassCapulet from Fig.~\ref{fig:classGraph} with regard to the object-based graph presented in Fig.~\ref{fig:objectGraphCompliant}.

\begin{definition}{Full compliance relation}
A \emph{full compliance relation} \FCR on $N \times N^\alpha$ is a compliance relation such that
\begin{equation}
\label{eq:fullComplianceRelation1}
\forall n \in N , \exists n^\alpha \in N^\alpha \;:\; n \FCR n^\alpha. 
\end{equation}
\vspace{-24pt}
\end{definition}

\begin{definition}{Full Compliance with a class-based graph}\label{def:fullComplianceRelation}
An object-based graph $g = \langle N, A \rangle$ is \emph{fully compliant} with a class-based graph $g^\alpha= \langle N^\alpha, A^\alpha \rangle$, denoted $g \FCR g^\alpha$, iff there~exists a full compliance relation \FCR on $N \times N^\alpha$.
\end{definition}

To illustrate full compliance, consider the object-based graph presented in Fig.~\ref{fig:objectGraph} and the class-based graph presented in Fig.~\ref{fig:classGraph}. The relation \FCR, such that  \Romeo \FCR \ClassMrMontague, \Tybalt \FCR \ClassCapulet, and  \Juliet \FCR \ClassMissCapulet, is a full compliance relation.

First, \Romeo (resp. \Tybalt, \Juliet) is a relational member of \ClassMrMontague (resp. \ClassCapulet, \ClassMissCapulet). Second, for all class-based arcs, full member arcs exist. 
Third, there is no class without a compliant node. The relation \FCR is therefore a compliance relation. Additionally, all the nodes are compliant with a class. As a conclusion, the relation \FCR is a full compliance relation.

Although each object-based graph fully compliant with a class-based graph is by definition compliant too, the opposite is not true. A compliant object-based graph may contain objects that are not members (neither strict not relational) of any classes of the class-based graph. Such an object-based graph is not fully compliant with a class-based graph as it does not satisfy Eq.~\ref{eq:fullComplianceRelation1}. As an example, the object-based graph presented in Fig.~\ref{fig:objectGraphCompliant} is compliant but not fully compliant with the class-based graph presented in Fig.~\ref{fig:classGraphCompliant}, as there is no class \Mercutio is a member of. 
\section{Related Works}
\label{sec:relatedWorks}

Among the related works, object-oriented languages, data\-base schemata, and ontologies have already mentioned in the introduction. A main drawback of these approaches with regard to the addressed problem is their limited support for arcs, with a limited set of predefined relations among nodes, such as \texttt{has-a} and \texttt{is-a} relations in the object-oriented  paradigm and joins in relational databases.

The Entity-Relationship (ER) model proposed by Chen~\cite{journal:atds:Chen:1976} is a data model in many aspects similar to one proposed in this paper. In the ER model, an entity is defined in similar to the concept of object: ``the information about an entity or a relationship [\ldots] is expressed by a set of attribute-value pairs''. Note that note only the entities, similarly to object-based nodes, are defined by attributes (or properties). Relationships, similarly to object-based arcs, are defined by their attributes too.

The concept of class is expressed in the ER model via ``sets'': entity sets are similar to class-based nodes. Relationship sets are similar to class-based arcs. Attribute sets are similar to property constraints. The ER model is more flexible with regard to relationships. The ER model allows for the definition of $n$-ary relationships, \ie relationships connecting more than two entities. The model proposed in this paper is limited to arcs among two nodes.

The ER model is more restrictive than our model with regard to the relation between entities and entity sets. In the ER model, all the attributes of an entity that to match the attribute sets of the associated entity set. In our model, an object may contain a property that does not satisfy any property constraint of the class the given object is an instance of.

Finally, in the ER model, the list of predicates to restrict the entities and the relationships is limited to constraints on allowable values for a value set, constraints on permitted values for a certain attribute, constraint between sets of existing values, and constraints between particular values. In the proposed example, any predicate may be used to constraint the values of properties, either for nodes, or for arcs.

In the area of knowledge representation, most proposed models are based on graph-based ontologies, such as RDF~\cite{w3c:RDF-PRIMER} or OWL~\cite{w3c:owl2primer}. In RDF, ``the things being described have properties which have values'', similarly to object-based nodes. The RDF Schema (RDFS) recommendation~\cite{w3c:rdfs}  defines a limited set of ``classes and properties that may be used to describe classes, properties and other resources''. RDF, combined with RDFS, does not provide the modelling power provided by our model with regard to the typing of relationships among objects, as RDF allows only for named relationships, without the possibility to attach a set of attributes. 

RDF and RDFS supports generalization and specialization of classes. The only relation considered between classes and objects is the \texttt{rdf:type} property that  is used to state that a resource is an instance of a class. No definition of compliance or similar global concepts is proposed in RDF and RDFS.

Similarly, OWL, as an extension of RDF and RDFS, supports the modelling of classes and objects with attributes. The OWL Full variant (the most complex OWL variant) allows relations to be objects, \ie relations among objects may be described with objects described by a set of properties. However, similarly to RDF and RDFS, OWL defines only the \texttt{type} property to connect objects and classes, and neither compliance nor similar global concepts is proposed.

Finally, the proposed model may be compared with a group of languages and protocols aiming at supporting social networks. The ontology Friend-of-a-Friend (shorten as FOAF) aims at describing persons and objects, as well as their relations. In FOAF, a list of classes and properties associated mainly to individuals, documents, multimedia data, and online activity are standardized. FOAF is based on RDF and OWL, and therefore has inherited some limitations from these standards: relations may not be described with attributes or properties, and the \texttt{type} property is the only property linking object and classes.

Another related approach is the Activity Streams protocol~\cite{www:activityStream} aiming at providing an aggregate view of the activities performed by individuals across the social websites they are interacting with.  In Activity Streams, an activity consists of an actor, an action performed by the actor (a \emph{verb}, a thing (an \emph{object)} that is the actor is performing his/her action against, and eventually a target involved. The Activity Streams protocol defines a list of standard verbs and object types.  In Activity Streams, verbs are not described with attributes and the list of standard object types is specified in the Activity Base Schema~\cite{www:activityStreamBaseSchema}.

Another recently proposed approach supporting social networks is the Open Graph protocol~\cite{www:opengraph}. In Open Graph, a model for user activities based on the concept of actions and objects is proposed. In a similar manner to RDF, activities in Open Graph are triplets <individual, action, object>. However, when RDF models actions without properties, the Open Graph actions may have attributes, as well as individuals and objects. However, Open Graph focuses on the \emph{creation of the social network}, while our approach is prescriptive, aiming at defining constraints (as class-based arcs and property predicates) to improve the \emph{identification} of a network compliant with a network schema.
\section{Conclusions}

In this paper, object-based and class-based graphs are formally defined. The concepts of membership and compliance are proposed as special relations between nodes and classes, either locally or globally.

The three types of compliance---partial, normal, and full compliance---correspond to three different situations. Partial compliance may be useful in the situation when nodes compliant with classes are progressively identified, when some classes may not have compliant nodes. An application may be the support for emergency teams going to a emergency site: some members of the team may already be on the emergency site, while others are on their way. A class-based graph may be used to check that the partial team is compliant. Normal compliance may be useful in the situation when all the classes have to be associated with a compliant node, even if some additional nodes may exist in the object-based graph. An example may be the verification that all the services of a service-oriented application are compliant with a class-based graph defining the type of services needed and their relations. Finally, full compliance may be useful in the situation when each class has to be associated with a compliant node, and each node has to be compliant with a class. An example may be the specification of controlled chemical reaction as class-based graphs. In such a situation, no additional chemical substance may be added to the graph of chemical substance directly participating in the chemical reaction.

Among future works, the development of algorithms to check various types of compliances of object-based graphs with a given class-based graph is still an open issue. A major issue in the development of such algorithms is scalability. These algorithms should be adapted to large modern networks, such as Facebook that consists of hundreds of millions of users, and therefore they should be efficient and scalable.

Another area of improvement is the support for inheritance in class-based graphs.	In the presented example, the class \ClassMissCapulet is a specialized class of the \ClassCapulet. Providing support for class inheritance would lead to a more expressive and concise representation of class-based graphs.

\section*{References}
\bibliography{biblio}
\bibliographystyle{model1-num-names}

\end{document}